\documentclass[a4paper,11pt,nopdfoutputerror]{quantumarticle}
\usepackage{amsmath,amssymb,amsthm}
\usepackage{booktabs}
\usepackage{array}
\usepackage[colorlinks=true,linkcolor=blue!60!black,citecolor=blue!60!black,urlcolor=blue!60!black]{hyperref}

\newtheorem{theorem}{Theorem}
\newtheorem{lemma}[theorem]{Lemma}
\newtheorem{proposition}[theorem]{Proposition}
\theoremstyle{remark}
\newtheorem*{remark}{Remark}

\newcommand{\Tr}{\mathrm{Tr}}
\newcommand{\NR}{\mathcal{N}_{R}}
\newcommand{\NS}{\mathcal{N}_{\bar S}}
\newcommand{\Smin}{S_p^{\min}}
\newcommand{\PT}{\Gamma}

\title{Exact certification of a positive-order R\'enyi additivity violation for an explicit channel pair}
\author{Artus Krohn-Grimberghe}
\affiliation{Percivio Ltd.}
\date{August 16, 2026}

\begin{document}

\maketitle

\begin{abstract}
Cubitt, Harrow, Leung, Montanaro, and Winter (CHLMW) exhibited an
explicit pair of quantum channels whose minimum output R\'enyi entropy
is nonadditive at order zero, and reported numerical violations at
positive orders close to zero~\cite{chlmw}. Their paper states that a
PPT-relaxed semidefinite-programming argument yields a rigorous
positive-order bound of order $10^{-2}$ for the same printed supports, without
printing the concrete endpoint, a dual witness, or a verifiable
certificate; a recent paper by Leung, Lovitz, and Wu summarizes the
public record for this pair as ``no rigorous endpoint was
obtained''~\cite{llw}. We close that gap for the trace-preserving
normalization of the printed pair fixed in
Section~\ref{sec:channels}. Small rational witness matrices prove that
every output of either channel has all eigenvalues between
$301/100000$ and $2/3$; a single explicit entangled input has an exact
rational joint output spectrum of rank eight; and two independent
elementary interval arguments turn these three facts into a proof of
strict additivity violation,
\[
  \Smin(\NR\otimes\NS) \;<\; \Smin(\NR)+\Smin(\NS),
\]
for every real order $0<p\le 1/22$. Every step of the verification
reduces to comparisons of integers, and the complete certificate is a
few small rational matrices that a reader can check with a short
program --- or, for any single order, by hand. To our knowledge,
consistent with the assessment of~\cite{llw}, this is the first
printed, computer-verifiable certified positive-order endpoint for this
explicit pair. We claim no novelty for the phenomenon or for the
eigenvalue-floor mechanism, both due to CHLMW, no priority over their
unprinted calculation, and no optimality of the endpoint.
\end{abstract}

\section{Introduction}\label{sec:intro}

For a quantum state $\rho$ and a real order $0<p<1$, the R\'enyi
entropy is
\[
  S_p(\rho)=\frac{\ln \Tr\rho^p}{1-p},
\]
and for a quantum channel $\mathcal{N}$ the minimum output R\'enyi
entropy is $\Smin(\mathcal{N})=\min_\rho S_p(\mathcal{N}(\rho))$, the
minimum over all input states. Whether this quantity is additive under
tensor products, $\Smin(\mathcal{N}_1\otimes\mathcal{N}_2)
=\Smin(\mathcal{N}_1)+\Smin(\mathcal{N}_2)$, was for a decade a central
question of quantum information theory, and its failure is by now known
for most orders. Werner and Holevo gave the first explicit
counterexample, at orders $p>4.79$~\cite{wernerholevo}; constructive
families followed for every $p>2$~\cite{grudka,szczygielski} and,
recently, for every $p>1$~\cite{derksen-lovitz}. Random constructions
settle every $p>1$ (Hayden and Winter~\cite{haydenwinter}) and the von
Neumann point $p=1$ (Hastings~\cite{hastings}), whose neighborhood
follows by continuity. The regime of small positive $p$ was settled
\emph{in existence} by CHLMW~\cite{chlmw}: they printed an explicit
pair of subspaces of $\mathbb{C}^4\otimes\mathbb{C}^3$ whose associated
channels violate additivity at $p=0$ (a rank-counting statement), they
reported, without proof, numerical violations for positive $p$ up to
roughly $0.1$ (in two normalization variants; see
Section~\ref{sec:channels}), and their Section~IV asserts that a
PPT-relaxed semidefinite-programming argument yields a rigorous bound
of order $10^{-2}$ for the same printed supports, without printing the concrete
endpoint, an optimizer, or a certificate. Leung, Lovitz, and
Wu~\cite{llw} prove violations for all $p>3/4$ and all $0\le p<1/4$ by
probabilistic (random-projection) methods, which do not print concrete
channel matrices together with a checkable certificate; an early claim
of explicit intervals covering $(0,0.2855)$ and $(0.7145,1)$ was
withdrawn by its authors for a crucial error~\cite{yuying}. About the
explicit CHLMW pair at positive order, reference~\cite{llw} states:
``Numerical evidence suggested that this pair continues to violate
additivity up to approximately $p=0.11$, but no rigorous endpoint was
obtained'' --- a statement about the public record, consistent with
CHLMW's unprinted SDP assertion. (The $\approx 0.1$-scale numerical
figures refer to normalization variants different from the channels
certified here; see Section~\ref{sec:channels}.)

Three statements must therefore be kept apart. (i)~That the pair
violates additivity on \emph{some} positive interval is CHLMW's: the
violation at $p=0$ plus continuity of the R\'enyi entropy in $p$
already proves the existence of an unspecified $p_0>0$. (ii)~A rigorous
bound of order $10^{-2}$ from the PPT-relaxed SDP is \emph{asserted} by
CHLMW but never printed. (iii)~What has been missing from the public
record --- and what this note supplies --- is a concrete endpoint
together with a checkable certificate. We do not improve any
existential range --- our interval $(0,1/22]$ lies inside the
existential range $0\le p<1/4$ of~\cite{llw}; ours is a printed
explicit pair with a checkable interval inside $(0,1)$.

This note supplies a rigorous endpoint. The proof is deliberately
elementary and certificate-based: beyond linear algebra at the level of
a first course, the only analytic facts used are that $t\mapsto t^p$ is
concave on $[0,\infty)$ for $0<p<1$ with derivative $p\,t^{p-1}$, and
that $t\mapsto x^t$ is decreasing for $0<x<1$. Every inequality that a
computer verifies is an inequality between two integers. The
certificate data consists of four rational witness matrices (two per
channel), one rational spectrum, and a short list of rational grid
points; all of it is printed in or shipped with this paper.

\begin{theorem}\label{thm:main}
Let $\NR$ and $\NS$ be the channels from $\mathbb{C}^4$ to
$\mathbb{C}^3$ obtained from the printed CHLMW subspaces $R$ and
$\bar S$ (the entrywise conjugate of $S$) by the inverse-input-marginal
trace-preserving normalization of Section~\ref{sec:channels}. Then for
every real $p$ with $0<p\le 1/22$,
\[
  \Smin(\NR\otimes\NS)\;<\;\Smin(\NR)+\Smin(\NS).
\]
\end{theorem}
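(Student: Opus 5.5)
The plan follows the three certified facts from the abstract: an eigenvalue floor and ceiling for single-channel outputs, an explicit joint spectrum, and an interval comparison.

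\textbf{Step 1: eigenvalue window.} For each channel $\mathcal{N}\in\{\NR,\NS\}$ I will certify that every output $\mathcal{N}(\rho)$ satisfies
\[
  \tfrac{301}{100000}\,I\;\preceq\;\mathcal{N}(\rho)\;\preceq\;\tfrac{2}{3}\,I .
\]
By linearity and convexity it suffices to treat pure inputs $\rho=|\psi\rangle\langle\psi|$. The lower bound says that the operator $\mathcal{N}^*(|v\rangle\langle v|)-\tfrac{301}{100000}I$ on $\mathbb{C}^4$ is positive semidefinite for every unit $v\in\mathbb{C}^3$. I will certify this with a rational witness: a matrix identity exhibiting the relevant biquadratic form $\langle v\otimes\bar\psi|M|v\otimes\bar\psi\rangle$ as $\tfrac{301}{100000}$ plus a PSD Gram term, possibly plus a PPT-type term $W^{\PT}$ with $W\succeq0$. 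Positivity of each rational matrix is checked by exact $LDL^{T}$ or leading principal minors, which are integer comparisons. The upper bound $2/3$ is handled the same way using a second witness, which gives two witnesses per channel.

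Given the window, for $0<p<1$ concavity of $t\mapsto t^p$ on $[a,b]$ places the spectrum $(\lambda_1,\lambda_2,\lambda_3)$, with trace $1$, on a chord. Then $\Tr\rho^p$ is minimized at an extreme point of the polytope $\{\lambda\in[a,b]^3,\ \sum_i\lambda_i=1\}$. These extreme points are $(2/3,\,1/3-a,\,a)$ and permutations, together with $(2/3,2/3,\ldots)$, which is infeasible. So
\[
  \Smin(\mathcal{N})\;\ge\;\frac{\ln f(p)}{1-p},\qquad f(p)=(2/3)^p+(1/3-a)^p+a^p,\quad a=\tfrac{301}{100000}.
\]
This holds for both channels, so
\[
  \Smin(\NR)+\Smin(\NS)\;\ge\;\frac{2\ln f(p)}{1-p}.
\]

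\textbf{Step 2: joint upper bound.} I will exhibit one explicit entangled input whose output has the exact rational spectrum $\mu_1,\dots,\mu_8$ (rank $8<9$), computed symbolically. Then
\[
  \Smin(\NR\otimes\NS)\;\le\;\frac{\ln g(p)}{1-p},\qquad g(p)=\sum_{i=1}^{8}\mu_i^p .
\]
It remains to prove $g(p)<f(p)^2$ for all $p\in(0,1/22]$. At $p=0$ this reads $8<9$. At $p=1$ both sides equal $1$, but we only need the small-$p$ range.

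\textbf{Step 3: interval verification, the main obstacle.} Converting $g(p)<f(p)^2$ on a continuum into finitely many integer comparisons is the hard part. I will partition $(0,1/22]$ at rational grid points $0=p_0<p_1<\dots<p_k=1/22$. On each cell $[p_j,p_{j+1}]$ I use monotonicity in $p$: each $x^p$ with $x<1$ is decreasing, so
\[
  g(p)\le g(p_j)\quad\text{and}\quad f(p)^2\ge f(p_{j+1})^2 .
\]
It then suffices to verify $g(p_j)<f(p_{j+1})^2$. Irrational powers are bracketed by rational bounds that are checked by raising to the denominator of $p$, which turns each comparison into an integer comparison. As a cross-check, a second argument uses the tangent-line bound $t^p\le t_0^p+p\,t_0^{p-1}(t-t_0)$ from concavity.

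The delicate cell is the one near $p=0$, where the margin $9-8=1$ must dominate the drift. There I will bound
\[
  g(p)\le 8\max_i\mu_i^{0}=8,\qquad f(p)^2\ge\bigl(3\,a^{p}\bigr)^2=9\,a^{2p},
\]
which gives strict inequality whenever $a^{2p}>8/9$. Since $a\approx0.003$ this holds roughly for $p<0.01$. The grid then covers the remaining range up to $1/22$, where the margin shrinks. I expect the endpoint $1/22$ to be dictated by where $f(p)^2-g(p)$ approaches zero and the certified grid can no longer close the gap.
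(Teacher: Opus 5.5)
Your outline is correct and has the same architecture as the paper. It uses decomposable witnesses for the floor $301/100000$ and the cap $2/3$: your $J_T-\ell I=G+W^{\PT}$ with $G,W\succeq 0$ is the paper's $J_T=P+Q^{\PT}$ with $P\succeq\alpha I$, $Q\succeq\beta I$, rewritten using $I^{\PT}=I$. It then uses the extremal spectrum $(2/3,\,1/3-\ell,\,\ell)$, one explicit rank-eight joint output, and a staircase in $p$ whose rational powers are bracketed by integer comparisons.

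The step you prove differently is the single-channel envelope. You minimize the concave function $\sum_i\lambda_i^p$ over the polytope $\{\lambda\in[\ell,u]^3:\sum_i\lambda_i=1\}$ at a vertex. The paper instead applies the tangent-line inequality at each $\lambda_i$ and regroups by partial summation, i.e.\ it shows directly that $e$ majorizes every admissible ordered spectrum. The paper's route needs no vertex enumeration or feasibility checks. Yours is a one-line appeal to a standard fact, provided the enumeration is complete: you should also exclude $(\ell,\ell,1-2\ell)$, which is infeasible because $1-2\ell>2/3$.

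The one ingredient your plan leaves as a placeholder is the joint input, and it is not a free choice. The paper takes $\Psi\propto\sum_j\sqrt{(w_R)_j(w_{\bar S})_j}\,|j\rangle|j\rangle$, so that $(M_R^{-1/2}\otimes M_{\bar S}^{-1/2})\Psi$ is maximally entangled. The output's expectation in $\sum_b|b\rangle|b\rangle$ is then a positive multiple of $\Tr(P_RP_S)=0$. This is how the orthogonality $R\perp S$ forces rank eight, and the same cancellation removes the square roots from the output. With the unweighted maximally entangled input the $M_T^{-1/2}$ factors do not cancel, and this expectation is no longer zero.

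Two minor points. First, $J_T$ has entries in $\mathbb{Q}(i,\sqrt2,\sqrt3,\sqrt5)$, so your Gram term $G$ is not a rational matrix. Its positivity needs exact number-field or outward-rounded interval arithmetic, as in the paper, rather than a rational $LDL^{\mathsf T}$ check. Second, the tight staircase cells are those near $1/22$, where $f(p)^2-g(p)$ is small, not the cell at $0$.
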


The theorem does not identify the true minimizing inputs of either
channel, and it does not identify the largest order at which the pair
violates additivity; Section~\ref{sec:interval} explains why the
present method cannot be pushed far beyond $1/22$ without new
witnesses.

\paragraph{Claim boundary.} The explicit supports, the order-zero
construction and violation, and the idea of proving a positive-order
violation through a uniform output-eigenvalue bound are all due to
CHLMW~\cite{chlmw}; the particular CPTP representatives certified here
are the ones fixed in Section~\ref{sec:channels}. Our
contribution is limited to explicit constants, printed certificates,
and independently repeatable verification: the first \emph{printed},
checkable positive-order endpoint for this pair, in the sense
of~\cite{llw} quoted above; we claim priority only for the printed,
checkable object, not over any unpublished calculation. The endpoint
$1/22$ is not claimed to be optimal. The witness matrices were found by an AI-assisted numerical
search; the search plays no role in the proof, and
Section~\ref{sec:verification} describes what was verified and how.

\section{The channel pair}\label{sec:channels}

CHLMW print two six-dimensional orthogonal subspaces
$R,S\subset\mathbb{C}^4\otimes\mathbb{C}^3$; their twelve (unnormalized)
basis vectors are reproduced in Appendix~\ref{app:subspaces}, and every
entry is $0$, $\pm 1$, or a cube root of unity. Write $A=\mathbb{C}^4$
for the input system and $B=\mathbb{C}^3$ for the output system, let
$P_T$ be the orthogonal projection onto a subspace $T$, and let
$\bar S$ denote the entrywise complex conjugate of $S$.

A subspace projection is turned into a channel by the standard
Choi correspondence. Because we work at positive order, the
normalization must be fixed explicitly --- at $p=0$ only ranks matter
and any channel with Choi support $T$ gives the same statement, but the
positive output eigenvalues are not invariant under changes of
normalization. We use the canonical (inverse-input-marginal) choice ---
the standard trace-preserving normalization, of the same
local-filtering type as adopted in~\cite{llw}. For that reason, ``the CHLMW pair'' in
this note always means the printed supports together with this
normalization. CHLMW themselves report numerical violations up to
$p\approx 0.096$ for a normalized-projection variant and up to
$p\approx 0.112$ for a numerically weighted variant on the same
supports; those are different normalizations, and neither number is
claimed to transfer to the channels defined here. The subspace
construction itself is a variation of the completely entangled
subspaces of Cubitt, Montanaro, and Winter~\cite{cmw}.
For $T\in\{R,\bar S\}$, exact computation from the printed vectors
gives the input marginals
\begin{align*}
  M_R&=\Tr_B(P_R)=\mathrm{diag}\!\left(\tfrac53,\tfrac32,\tfrac32,\tfrac43\right),\\
  M_{\bar S}&=\Tr_B(P_{\bar S})=\mathrm{diag}\!\left(\tfrac43,\tfrac32,\tfrac32,\tfrac53\right),
\end{align*}
both invertible, and we define
\begin{align*}
  J_T&=\bigl(M_T^{-1/2}\otimes I_B\bigr)\,P_T\,\bigl(M_T^{-1/2}\otimes I_B\bigr),\\
  \mathcal{N}_T(X)&=\Tr_A\!\left[J_T\,(X^{\mathsf T}\otimes I_B)\right].
\end{align*}
Then $J_T\succeq 0$ and $\Tr_B(J_T)=I_A$ exactly, so $\mathcal{N}_T$ is
completely positive and trace preserving. The single computation that
connects $J_T$ to channel outputs is the following.

\begin{lemma}[Product expectations are output matrix elements]\label{lem:product}
For unit vectors $x\in A$ and $y\in B$,
\[
  \langle x\otimes y|\,J_T\,|x\otimes y\rangle
  \;=\;
  \langle y|\,\mathcal{N}_T(\bar x\bar x^{*})\,|y\rangle ,
\]
where $\bar x$ is the entrywise conjugate of $x$.
\end{lemma}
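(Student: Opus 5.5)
The plan is to compute the right-hand side straight from the definition of $\mathcal{N}_T$ and match it with the left-hand side. I will use a product basis $\{|i\rangle\otimes|k\rangle\}$ of $A\otimes B$, take $X=\bar x\bar x^{*}$, and first work out its transpose. Since $\bar x^{*}=x^{\mathsf T}$ as a row vector, we get $X=\bar x\, x^{\mathsf T}$, and therefore
\[
  X^{\mathsf T}=(\bar x\,x^{\mathsf T})^{\mathsf T}=x\,\bar x^{\mathsf T}=x x^{*}=|x\rangle\langle x| .
\]
This is the only place the entrywise conjugate enters. It is exactly what cancels the transpose in the Choi-type formula $\mathcal{N}_T(X)=\Tr_A[J_T(X^{\mathsf T}\otimes I_B)]$.

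Substituting gives $\mathcal{N}_T(\bar x\bar x^{*})=\Tr_A\!\left[J_T\,(|x\rangle\langle x|\otimes I_B)\right]$. Next I evaluate the partial trace against a rank-one operator on $A$. Writing $\Tr_A[Z]=\sum_i(\langle i|\otimes I_B)Z(|i\rangle\otimes I_B)$ and using cyclicity of the trace over the $A$ factor, I get the standard identity
\[
  \Tr_A\!\left[J_T(|x\rangle\langle x|\otimes I_B)\right]=(\langle x|\otimes I_B)\,J_T\,(|x\rangle\otimes I_B).
\]
To make this airtight I will expand $J_T=\sum_{i,j}|i\rangle\langle j|\otimes J_{ij}$ with blocks $J_{ij}$ acting on $B$. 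The left side is then $\sum_{i,j}\langle j|x\rangle\langle x|i\rangle J_{ij}$. The right side is $\sum_{i,j}\overline{x_i}\,x_j\,J_{ij}$, which is the same quantity.

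Finally, sandwiching with $\langle y|\cdot|y\rangle$ turns $(\langle x|\otimes I_B)J_T(|x\rangle\otimes I_B)$ into $\langle x\otimes y|J_T|x\otimes y\rangle$, which proves the lemma. Unit normalization of $x$ and $y$ is not used by the identity itself; it only matters for interpreting $\bar x\bar x^{*}$ as a state and the result as an output matrix element.

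The computation is routine. The one step needing care is the conjugation and transpose convention: $X^{\mathsf T}$ with $X=\bar x\bar x^{*}$ must be $|x\rangle\langle x|$ and not $|\bar x\rangle\langle\bar x|$. I will check this by writing out the entries $X_{ij}=\overline{x_i}\,x_j$, so that $(X^{\mathsf T})_{ij}=x_i\overline{x_j}$.
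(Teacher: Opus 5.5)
Your proof is correct and takes essentially the paper's route: a direct computation from $\mathcal{N}_T(X)=\Tr_A[J_T(X^{\mathsf T}\otimes I_B)]$, whose only real content is that $(\bar x\bar x^{*})^{\mathsf T}=xx^{*}$ cancels the transpose. The paper does this in a single four-index sum. You instead factor it through the block identity $\Tr_A[J_T(|x\rangle\langle x|\otimes I_B)]=(\langle x|\otimes I_B)J_T(|x\rangle\otimes I_B)$ before sandwiching with $y$, which is the same computation organized slightly differently.
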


\begin{proof}
Write $J$ for $J_T$ with entries $J_{(a,b),(a',b')}$. By the definition
of $\mathcal{N}_T$,
$\langle y|\mathcal{N}_T(\bar x\bar x^{*})|y\rangle
=\sum_{a,a',b,b'} J_{(a,b),(a',b')}\,
(\bar x\bar x^{*})^{\mathsf T}_{a'a}\,\bar y_b\, y_{b'}$.
Now $(\bar x\bar x^{*})^{\mathsf T}_{a'a}=(\bar x\bar x^{*})_{aa'}
=\bar x_a x_{a'}$, so the sum equals
$\sum \bar x_a \bar y_b\, J_{(a,b),(a',b')}\, x_{a'} y_{b'}
=\langle x\otimes y|J|x\otimes y\rangle$.
\end{proof}

Since conjugation $x\mapsto\bar x$ maps the set of unit vectors onto
itself, controlling all product expectations of $J_T$ controls all
matrix elements $\langle y|\mathcal{N}_T(\phi\phi^*)|y\rangle$ over all
pure inputs $\phi$ and all unit $y$.

\section{Eigenvalue floor and cap}\label{sec:floorcap}

The certificates for the floor and the cap use one two-line mechanism.
For a matrix $Q$ on $A\otimes B$, the partial transpose $Q^{\PT}$ (on
$B$) has entries $(Q^{\PT})_{(a,b),(a',b')}=Q_{(a,b'),(a',b)}$.

\begin{lemma}[Decompositions bound product expectations]\label{lem:decomp}
Let $H$ be Hermitian on $A\otimes B$ and suppose
$H=P+Q^{\PT}$ with $P\succeq\alpha I$ and $Q\succeq\beta I$ for real
$\alpha,\beta$. Then for all unit vectors $u\in A$, $v\in B$,
\[
  \langle u\otimes v|\,H\,|u\otimes v\rangle\;\ge\;\alpha+\beta .
\]
\end{lemma}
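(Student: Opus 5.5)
The plan is to split the expectation by linearity and bound each piece separately:
\[
  \langle u\otimes v|H|u\otimes v\rangle
  =\langle u\otimes v|P|u\otimes v\rangle+\langle u\otimes v|Q^{\PT}|u\otimes v\rangle .
\]
The first term is at least $\alpha\,\|u\otimes v\|^2=\alpha$, because $P\succeq\alpha I$ and $u\otimes v$ is a unit vector. So the whole argument comes down to showing that the second term is at least $\beta$. The idea is that a partial transpose acts on product vectors by conjugating the $B$-factor.

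For the second term I would write out the entries using the definition of $\PT$ given just before the statement:
\[
  \langle u\otimes v|Q^{\PT}|u\otimes v\rangle
  =\sum_{a,a',b,b'}\bar u_a\bar v_b\,Q_{(a,b'),(a',b)}\,u_{a'}v_{b'} .
\]
Now rename the summation indices $b\leftrightarrow b'$. The sum becomes
\[
  \sum \bar u_a\, v_b\, Q_{(a,b),(a',b')}\, u_{a'}\bar v_{b'}
  =\langle u\otimes\bar v|\,Q\,|u\otimes\bar v\rangle .
\]
This is the same index bookkeeping as in the proof of Lemma~\ref{lem:product}.

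Since entrywise conjugation preserves the norm, $u\otimes\bar v$ is again a unit vector. Therefore $Q\succeq\beta I$ gives $\langle u\otimes\bar v|Q|u\otimes\bar v\rangle\ge\beta$, and adding the two bounds yields $\alpha+\beta$.

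The only delicate step is checking the index swap: one has to confirm that the transpose convention on $B$ really produces $\bar v$ on both sides rather than a mixed term. Once that swap is verified explicitly, the rest is immediate. No Hermiticity of $P$ or $Q$ is needed beyond what their semidefinite bounds already imply.
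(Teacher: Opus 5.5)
Your proposal is correct and follows the paper's proof essentially verbatim: split by linearity, bound the $P$-term by $\alpha$, and relabel $b\leftrightarrow b'$ to rewrite $\langle u\otimes v|Q^{\PT}|u\otimes v\rangle$ as $\langle u\otimes\bar v|Q|u\otimes\bar v\rangle\ge\beta$. Your explicit remark that $u\otimes\bar v$ is again a unit vector is a small step the paper leaves implicit.
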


\begin{proof}
For any $u,v$,
$\langle u\otimes v|Q^{\PT}|u\otimes v\rangle
=\sum \bar u_a\bar v_b\,Q_{(a,b'),(a',b)}\,u_{a'}v_{b'}
=\langle u\otimes\bar v|Q|u\otimes\bar v\rangle$,
by relabeling $b\leftrightarrow b'$. Hence
$\langle u\otimes v|H|u\otimes v\rangle
=\langle u\otimes v|P|u\otimes v\rangle
+\langle u\otimes\bar v|Q|u\otimes\bar v\rangle
\ge\alpha+\beta$.
\end{proof}

The shipped certificate stores, for each $T\in\{R,\bar S\}$, two
rational Hermitian matrices: a floor witness, recorded together with
the rational scale $\tfrac{17}{16}$ that the verifier applies before
checking, giving $Q_T^{\mathrm{low}}$, and a cap witness
$Q_T^{\mathrm{cap}}$. The matrices $P_T^{\mathrm{low}}$ and
$P_T^{\mathrm{cap}}$ are not stored; they are defined by subtraction
from the first and third identities below, and their entries are in
general algebraic rather than rational because $J_T$ is. (The floor
file also carries historical metadata --- $\delta=1/400$, $p=1/100$
--- from the project that produced it; the exact number-field program
of Section~\ref{sec:verification} certifies the floor in that
historical convention, while the interval-arithmetic program certifies
the constants displayed here.) The four matrices so defined satisfy
\emph{exactly}
\begin{gather*}
  J_T=P_T^{\mathrm{low}}+\bigl(Q_T^{\mathrm{low}}\bigr)^{\PT},\\
  P_T^{\mathrm{low}}\succeq\tfrac{3}{1000}I,
  \qquad Q_T^{\mathrm{low}}\succeq\tfrac{1}{100000}I,\\[0.7ex]
  \tfrac23 I-J_T=P_T^{\mathrm{cap}}+\bigl(Q_T^{\mathrm{cap}}\bigr)^{\PT},\\
  P_T^{\mathrm{cap}}\succeq\tfrac{1}{500}I,
  \qquad Q_T^{\mathrm{cap}}\succeq\tfrac{1}{500}I.
\end{gather*}
All eight positivity statements are verified in exact arithmetic
(Section~\ref{sec:verification}); the constants contain visible slack
and are not numerical tolerances.

\begin{proposition}[Floor and cap]\label{prop:floorcap}
For every input state $\rho$ and $T\in\{R,\bar S\}$, every eigenvalue
$\lambda$ of $\mathcal{N}_T(\rho)$ satisfies
\[
  \ell\le\lambda\le u,
  \quad
  \ell=\tfrac{3}{1000}+\tfrac{1}{100000}=\tfrac{301}{100000},
  \quad
  u=\tfrac23 .
\]
\end{proposition}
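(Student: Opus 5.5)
The plan is to reduce everything to pure inputs and product expectations, then apply Lemma~\ref{lem:decomp} twice, once for the floor and once for the cap. An eigenvalue $\lambda$ of a Hermitian matrix $\sigma=\mathcal{N}_T(\rho)$ satisfies
\[
\min_{\|y\|=1}\langle y|\sigma|y\rangle\le\lambda\le\max_{\|y\|=1}\langle y|\sigma|y\rangle.
\]
So it suffices to show $\ell\le\langle y|\mathcal{N}_T(\rho)|y\rangle\le u$ for every unit $y\in B$ and every state $\rho$.

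\textbf{Reduction to pure inputs.} Write $\rho=\sum_i q_i\phi_i\phi_i^*$ with $q_i\ge0$ and $\sum_i q_i=1$. Then $\langle y|\mathcal{N}_T(\rho)|y\rangle=\sum_i q_i\langle y|\mathcal{N}_T(\phi_i\phi_i^*)|y\rangle$ by linearity. Since a convex combination of numbers in $[\ell,u]$ stays in $[\ell,u]$, we may take $\rho=\phi\phi^*$ with $\phi$ a unit vector. Set $x=\bar\phi$, which is also a unit vector with $\bar x=\phi$. Lemma~\ref{lem:product} then gives
\[
\langle y|\mathcal{N}_T(\phi\phi^*)|y\rangle=\langle x\otimes y|J_T|x\otimes y\rangle .
\]
The problem is now to bound product expectations of $J_T$.

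\textbf{Floor.} Apply Lemma~\ref{lem:decomp} with $H=J_T$, $P=P_T^{\mathrm{low}}$, $Q=Q_T^{\mathrm{low}}$, $\alpha=3/1000$ and $\beta=1/100000$. The exact identity $J_T=P_T^{\mathrm{low}}+(Q_T^{\mathrm{low}})^{\PT}$ and the two certified positivity statements give
\[
\langle x\otimes y|J_T|x\otimes y\rangle\ge 3/1000+1/100000=\ell .
\]

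\textbf{Cap.} Apply Lemma~\ref{lem:decomp} with $H=\tfrac23 I-J_T$, $P=P_T^{\mathrm{cap}}$, $Q=Q_T^{\mathrm{cap}}$ and $\alpha=\beta=1/500$. Since $\|x\otimes y\|=1$, we have $\langle x\otimes y|\tfrac23 I|x\otimes y\rangle=\tfrac23$. This yields $\tfrac23-\langle x\otimes y|J_T|x\otimes y\rangle\ge 2/500>0$, hence $\langle x\otimes y|J_T|x\otimes y\rangle\le\tfrac23=u$. The cap thus holds with slack $1/250$, which is not needed.

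\textbf{Where the difficulty sits.} The logical argument is immediate once the inputs hold. The substantive content is in the hypotheses: the exact decompositions of $J_T$ and $\tfrac23 I-J_T$, and the eight semidefinite inequalities. These involve algebraic entries coming from $M_T^{-1/2}$. For this proposition we take them as established by the exact verification of Section~\ref{sec:verification}.

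The only points requiring care in the write-up are two conventions. First, the conjugation in Lemma~\ref{lem:product} must be matched correctly by choosing $x=\bar\phi$, so that every pure input is covered. Second, the partial transpose in Lemma~\ref{lem:decomp} is taken on $B$, consistent with the stored witnesses. Both are bookkeeping rather than genuine obstacles.
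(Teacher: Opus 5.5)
Your proposal is correct and follows the paper's proof essentially step for step: Lemma~\ref{lem:product} plus Lemma~\ref{lem:decomp} applied to $H=J_T$ and $H=\tfrac23 I-J_T$, the bound of each eigenvalue by the extreme expectation values, and linearity to pass from pure to mixed inputs. The differences are cosmetic. You make the substitution $x=\bar\phi$ explicit, whereas the paper covers it in the remark after Lemma~\ref{lem:product}. You also reduce mixed inputs to pure ones at the level of expectation values rather than at the end.
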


\begin{proof}
By Lemmas~\ref{lem:product} and~\ref{lem:decomp}, for every pure input
$\phi$ and unit $y$ we have
$\langle y|\mathcal{N}_T(\phi\phi^*)|y\rangle\ge\ell$ (apply the lemmas
with $H=J_T$) and
$\tfrac23-\langle y|\mathcal{N}_T(\phi\phi^*)|y\rangle\ge\tfrac{1}{500}+\tfrac{1}{500}>0$
(apply them with $H=\tfrac23 I-J_T$). A Hermitian matrix whose
expectations in all unit vectors lie in $[\ell,u]$ has all eigenvalues
in $[\ell,u]$. Mixed inputs are convex combinations of pure ones, and
$\mathcal{N}_T$ is linear, so the same bounds hold for all inputs.
\end{proof}

\section{From the floor and cap to a single-channel bound}\label{sec:single}

Fix $0<p<1$ and write $f(t)=t^p$, which is concave on $[0,\infty)$ with
$f'(t)=p\,t^{p-1}$ decreasing. Every output spectrum
$\lambda_1\ge\lambda_2\ge\lambda_3$ of either channel satisfies
$\lambda_1\le u$, $\lambda_3\ge\ell$, and
$\lambda_1+\lambda_2+\lambda_3=1$. The extreme admissible spectrum is
\begin{align*}
  e=(e_1,e_2,e_3)&=\bigl(u,\,1-u-\ell,\,\ell\bigr)\\
   &=\Bigl(\tfrac23,\,\tfrac{99097}{300000},\,\tfrac{301}{100000}\Bigr).
\end{align*}

\begin{lemma}[Envelope]\label{lem:envelope}
For every admissible spectrum $\lambda$ as above,
\[
  f(\lambda_1)+f(\lambda_2)+f(\lambda_3)\;\ge\;f(e_1)+f(e_2)+f(e_3).
\]
\end{lemma}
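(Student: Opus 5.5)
The plan is a two-step majorization argument that uses only the concavity of $f$. The constraints $\lambda_1\le u$, $\lambda_3\ge\ell$, $\sum\lambda_i=1$ with $\lambda_1\ge\lambda_2\ge\lambda_3$ describe a polytope. A concave symmetric function such as $F(\lambda)=\sum_i f(\lambda_i)$ is Schur-concave, so it suffices to show that $e$ majorizes every admissible $\lambda$. Since the problem is small, I would instead do the two moves by hand with the elementary fact the paper permits: for concave $f$ and $a\ge b$, moving mass $\epsilon\ge0$ from the smaller entry to the larger does not increase $f(a)+f(b)$. That is, $f(a+\epsilon)+f(b-\epsilon)\le f(a)+f(b)$, by comparing slopes of chords via the monotonicity of $f'$.

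First I would check that $e$ is admissible and ordered: $e_1=2/3\ge e_2\approx0.330\ge e_3=0.00301$. Given an admissible $\lambda$, the first move lowers $\lambda_3$ to $\ell$ and raises $\lambda_2$ by $\epsilon_1=\lambda_3-\ell\ge0$. This transfers mass from the smaller coordinate $\lambda_3$ to the larger $\lambda_2$, so the sum of $f$ does not increase. The intermediate vector is $(\lambda_1,\lambda_2+\epsilon_1,\ell)$.

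The second move lowers $\lambda_2+\epsilon_1$ to $1-u-\ell$ and raises $\lambda_1$ to $u$, with $\epsilon_2=u-\lambda_1\ge0$. Again mass goes from the coordinate with the smaller value to the one with the larger value, because $\lambda_1\ge\lambda_2+\epsilon_1-\,$? — this needs a check. Concavity via the chord-slope comparison requires only that $b-\epsilon\le a$ before the move, i.e. that the decreased point $b-\epsilon$ sits below the increased point $a$. Here the decreased point becomes $e_2$ and the increased point starts at $\lambda_1$. To run the chord comparison cleanly, I would state the pairwise inequality as follows: if $x\le y$, $x'\le x$, $y'\ge y$ and $x'+y'=x+y$, then $f(x')+f(y')\le f(x)+f(y)$. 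The proof writes $f(x)-f(x')\ge\epsilon f'(x)\ge\epsilon f'(y)\ge f(y')-f(y)$ using monotone $f'$, and needs only $x\le y$ initially.

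For move one, $x=\lambda_3\le y=\lambda_2$. For move two, $x=\lambda_2+\epsilon_1$ and $y=\lambda_1$, so I need $\lambda_2+\lambda_3-\ell\le\lambda_1$. This can fail, for example at $\lambda=(1/3,1/3,1/3)$. The fix is to handle this case by swapping the roles: choose $y=\max$ of the two and note the inequality is symmetric. The pairwise inequality in fact holds whenever the new pair $(x',y')$ spreads the old pair, i.e. $x'\le\min(x,y)$ and $y'\ge\max(x,y)$. In move two, $x'=e_2$ and $y'=u$, and we need $e_2\le\min(\lambda_1,\lambda_2+\epsilon_1)$ and $u\ge\max$. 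The bound $u\ge\lambda_1$ holds by hypothesis. Also $u\ge\lambda_2+\epsilon_1$, since this equals $1-\lambda_1-\ell\le 1-\lambda_1\le u$ whenever $\lambda_1\ge1/3$, which holds because $\lambda_1$ is the largest of three numbers summing to one. Then $e_2=1-u-\ell\le\min$ follows by the sum constraint.

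The main obstacle is exactly this ordering bookkeeping in move two. I would resolve it by proving the spreading version of the two-point concavity inequality once and invoking it twice.
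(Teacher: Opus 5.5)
Your argument is correct, but it follows a different route from the paper's. The paper never moves mass. It applies the tangent-line inequality $f(\lambda_i)-f(e_i)\ge f'(\lambda_i)(\lambda_i-e_i)$ once per coordinate and regroups by Abel summation into $(c_1-c_2)(s_1-t_1)+(c_2-c_3)(s_2-t_2)$. Then $c_1\le c_2\le c_3$ together with the partial-sum bounds $s_1\le u$ and $s_2=1-\lambda_3\le 1-\ell$ finishes it. In effect it proves your opening Schur-concavity remark directly (Karamata's inequality) once the majorization $e\succ\lambda$ is read off.

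You instead split that majorization into two explicit transfers, each settled by a two-point concavity inequality. This is concrete and easy to picture, but it creates the bookkeeping you ran into. The intermediate vector $(\lambda_1,1-\lambda_1-\ell,\ell)$ need not be sorted, so you need the symmetric ``spreading'' form of the two-point inequality. The second move is a genuine spread only because $u\ge 1-\lambda_1-\ell$, which you obtain from $\lambda_1\ge 1/3$ and the specific value $u=2/3$. With a smaller cap, say $u=\tfrac12$ at $\lambda=(\tfrac13,\tfrac13,\tfrac13)$, your second move would be a compression and the chain would break, even though the lemma still holds. The paper's summation-by-parts argument does not depend on the particular values of $\ell$ and $u$, nor on the ordering of $e$. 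In a final write-up, state the spreading lemma first and drop the exploratory false start in the second move.
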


\begin{proof}
Since $f$ is concave, its tangent line at any point lies above its
graph: $f(t)\le f(s)+f'(s)(t-s)$ for all $s>0$, $t\ge0$. Taking
$s=\lambda_i$ and $t=e_i$ gives
\begin{equation}\label{eq:tangent}
  f(\lambda_i)-f(e_i)\;\ge\; c_i\,(\lambda_i-e_i),
\end{equation}
where $c_i=f'(\lambda_i)=p\,\lambda_i^{\,p-1}$.
Because $\lambda_1\ge\lambda_2\ge\lambda_3>0$ and $f'$ is decreasing,
$c_1\le c_2\le c_3$. Write the partial sums
$s_k=\lambda_1+\dots+\lambda_k$ and $t_k=e_1+\dots+e_k$. The
constraints give $s_1\le t_1$ (that is, $\lambda_1\le u$),
$s_2=1-\lambda_3\le 1-\ell=t_2$, and $s_3=t_3=1$. Summing
\eqref{eq:tangent} over $i$ and regrouping the right-hand side by
partial summation,
\begin{multline*}
  \sum_{i=1}^{3}c_i(\lambda_i-e_i)
  =(c_1-c_2)(s_1-t_1)\\
  +(c_2-c_3)(s_2-t_2)+c_3(s_3-t_3)\;\ge\;0,
\end{multline*}
since in each of the first two products both factors are $\le 0$ and
the last term vanishes.
\end{proof}

Define
\begin{equation}\label{eq:A}
  A(p)=\Bigl(\tfrac23\Bigr)^{p}
      +\Bigl(\tfrac{99097}{300000}\Bigr)^{p}
      +\Bigl(\tfrac{301}{100000}\Bigr)^{p}.
\end{equation}
Combining Proposition~\ref{prop:floorcap} and
Lemma~\ref{lem:envelope}, for every input $\rho$ and every
$T\in\{R,\bar S\}$ we get
$\Tr\bigl(\mathcal{N}_T(\rho)^p\bigr)\ge A(p)$,
and therefore, since $\ln$ is increasing and $1-p>0$,
\begin{equation}\label{eq:Smin-lower}
  \Smin(\mathcal{N}_T)\;\ge\;\frac{\ln A(p)}{1-p}.
\end{equation}

\section{One explicit joint input}\label{sec:joint}

Let $w_R=(\tfrac53,\tfrac32,\tfrac32,\tfrac43)$ and
$w_{\bar S}=(\tfrac43,\tfrac32,\tfrac32,\tfrac53)$ be the diagonal
entries of the two input marginals, and let $\Psi\in A\otimes A$ be the
unit vector proportional to
\[
  \sum_{j=1}^{4}\sqrt{(w_R)_j\,(w_{\bar S})_j}\;|j\rangle\otimes|j\rangle .
\]
(The weights are the ones induced by the two inverse-marginal
normalizations; the unweighted maximally entangled vector does not
produce the rational spectrum below.) Exact computation gives the joint
output
$\sigma=(\NR\otimes\NS)\bigl(\Psi\Psi^{*}\bigr)$ with the exact spectrum
\[
  \mathrm{spec}(\sigma)=
  \bigl\{0,
  \tfrac{5}{46},\tfrac{5}{46},
  \tfrac{18}{161},\tfrac{18}{161},
  \tfrac{51}{322},\tfrac{51}{322},
  \tfrac{39}{322},\tfrac{39}{322}\bigr\},
\]
a rank-eight rational spectrum on the nine-dimensional output space
(the entries sum to one). Define
\begin{equation}\label{eq:B}
  B(p)=2\Bigl(\tfrac{5}{46}\Bigr)^{p}
      +2\Bigl(\tfrac{18}{161}\Bigr)^{p}
      +2\Bigl(\tfrac{51}{322}\Bigr)^{p}
      +2\Bigl(\tfrac{39}{322}\Bigr)^{p}.
\end{equation}
Since $\sigma$ is the output of one particular input,
$\Tr\sigma^p=B(p)$ gives
\begin{equation}\label{eq:Smin-upper}
  \Smin(\NR\otimes\NS)\;\le\;\frac{\ln B(p)}{1-p}.
\end{equation}

By \eqref{eq:Smin-lower} and \eqref{eq:Smin-upper},
Theorem~\ref{thm:main} follows once we prove
\begin{equation}\label{eq:master}
  A(p)^2\;>\;B(p)
  \qquad\text{for all } 0<p\le\tfrac{1}{22}.
\end{equation}
At $p=0$ this inequality degenerates exactly to CHLMW's rank count:
$A(0)^2=3^2=9$ output dimensions against $B(0)=8$ nonzero joint
eigenvalues. The content of the present note is that the printed
witnesses keep the inequality strict on a whole interval.

\section{Two elementary interval arguments}\label{sec:interval}

Inequality \eqref{eq:master} involves the transcendental functions
$x^p$, but on the certified interval it reduces to finitely many
comparisons of integers. Both arguments below rest on the same
high-school kernel: \emph{a claimed bound $m/n$ on a rational power is
an integer inequality.} For instance,
\[
  \Bigl(\tfrac23\Bigr)^{1/22}>\tfrac{m}{n}
  \quad\Longleftrightarrow\quad
  2\,n^{22}>3\,m^{22},
\]
and the right-hand side is checked by integer multiplication alone. The
verifier encloses every needed power $x^{a/b}$ between two nearby
rationals in exactly this way (by binary search on a dyadic grid), and
all roundings are outward, so no comparison is ever decided by an
approximation.

\subsection{The staircase argument}\label{sec:staircase}

All seven distinct bases appearing in $A(p)$ and $B(p)$ lie strictly
between $0$ and $1$, and for $0<x<1$ the function $p\mapsto x^p$ is decreasing.
Hence on any cell $[l,r]\subset[0,1]$,
\[
  A(p)\ge A(r)
  \quad\text{and}\quad
  B(p)\le B(l)
  \quad(l\le p\le r).
\]
The certificate partitions $[0,1/22]$ at the ten rational points
\[
  0,\;\tfrac{1}{100},\;\tfrac{1}{50},\;\tfrac{3}{100},\;\tfrac{7}{200},\;
  \tfrac{1}{25},\;\tfrac{21}{500},\;\tfrac{11}{250},\;\tfrac{9}{200},\;
  \tfrac{1}{22},
\]
and proves the single rational-power inequality $A(r)^2>B(l)$ on each
of the nine cells $[l,r]$, each such inequality being a finite chain of
integer comparisons as above. Together these prove \eqref{eq:master} on
all of $(0,1/22]$. This argument uses no derivatives, no logarithms,
and no analysis beyond the monotonicity of $x^p$ in $p$.

\subsection{The derivative argument}\label{sec:derivative}

A second, independently implemented certificate proves the same
interval by a different route: it encloses
$D(p)=A(p)^2-B(p)$ and its derivative in rational intervals (the
logarithms of the seven rational bases are enclosed by an exact
$\operatorname{artanh}$ series with a controlled geometric tail),
proves $D'(p)<0$ throughout $[0,1/22]$, and proves $D(1/22)>0$. A
decreasing function that is positive at the right endpoint is positive
on the whole interval. The two arguments share the primitive facts of
Sections~\ref{sec:floorcap} and~\ref{sec:joint} but nothing at the
interval-reasoning layer.

\begin{remark}
The endpoint $1/22\approx 0.045$ is close to the intrinsic limit of the
present envelope: with the certified floor $\ell=301/100000$ and cap
$u=2/3$, the comparison $A(p)^2>B(p)$ itself fails shortly after
$p\approx 0.047$ (a floating-point observation offered as context, not
a certified statement). Extending the certified interval materially toward
the $\approx 0.1$ scale of CHLMW's reported numerics (observed in
different normalizations of the same supports; see
Section~\ref{sec:channels}) requires a larger certified floor (or a
genuinely sharper single-channel bound), not merely a finer
staircase.
\end{remark}

\section{Verification and provenance}\label{sec:verification}

The artifact ships four verification programs (all produced with AI
assistance under the author's direction; independence below means
separately written implementations and input transcriptions, not
separate authorship). Program by program:
\begin{itemize}
\item The \emph{interval-arithmetic program} reconstructs the
subspaces from its own transcription of CHLMW's printed bases,
encloses every algebraic entry in rational intervals with outward
rounding, converts entrywise error into an operator-norm bound, and
certifies the eight positivity statements of
Section~\ref{sec:floorcap} (the floor $301/100000$ and the cap $2/3$,
for both channels), the exact rational joint output matrix with its
characteristic polynomial, and the strict endpoint inequality
$A(1/22)^2>B(1/22)$.
\item The \emph{exact number-field program} was written separately,
from the paper's printed vectors and the certificate files alone. It
represents every matrix entry exactly in the number field
$\mathbb{Q}(i,\sqrt2,\sqrt3,\sqrt5)$ (as $16$ rational components),
decides positive semidefiniteness by exact Hermitian $LDL^{\dagger}$
factorization with exact sign decisions, recomputes the joint output
matrix and rejects on any mismatch with the shipped one, and certifies
the eigenvalue floor in the historical convention of the certificate
file ($\delta=1/400$ at order $p=1/100$). It does not check the cap or
the endpoint inequality; within its scope it is a second, independent
confirmation of the floor decompositions and the joint matrix.
\item The \emph{staircase program} certifies $A(p)^2>B(p)$ for every
$0<p\le 1/22$ from the ten rational grid points of
Section~\ref{sec:staircase}, reusing only three proven quantities
(floor, cap, joint spectrum) from an interval-arithmetic report that
it first re-binds to the shipped certificate files by hash.
\item The \emph{derivative program} certifies the same full-interval
claim by the independent argument of Section~\ref{sec:derivative} ---
a negative rational enclosure of the derivative on $[0,1/22]$ plus a
strictly positive endpoint gap --- under the same hash binding. Its
endpoint gap is a second in-artifact certification of the endpoint
inequality.
\end{itemize}
Thus the floor decompositions and the joint matrix are certified by
two independently written implementations; the endpoint inequality and
the full-interval claim are each certified by two programs
implementing independent arguments; the cap decompositions are
certified by the interval-arithmetic program (and rest on the same
exact witness data as the floor). Every program rejects deliberately
corrupted variants of its inputs with a nonzero exit code --- the
interval-arithmetic, staircase, and derivative programs through
built-in corruption controls, the exact number-field program under
documented one-value mutations of its two input files (a corrupted
witness entry, a corrupted floor parameter, a corrupted joint matrix)
--- a standard check that the tests are able to fail.

The witness matrices were found by an AI-assisted numerical search; the
search plays no role in the proof. Everything rests on the certificate
files --- a few small rational matrices and one rational spectrum ---
and on the four verification programs, which any reader can inspect and
run in a few seconds on a laptop. The certificate and verifier files,
with their SHA-256 hashes, are listed in Appendix~\ref{app:artifact}.

On the certificate side, the decomposable floor and cap witnesses of
Section~\ref{sec:floorcap} are explicit rational dual-feasible points
for the PPT relaxation --- the hierarchy of Doherty, Parrilo, and
Spedalieri~\cite{dps}, which CHLMW describe invoking --- not a tighter
relaxation. The step from a floating-point solver output to an exactly
checkable rational object has direct precedents in rigorous
semidefinite-programming postprocessing~\cite{jansson}, rational
sum-of-squares decompositions~\cite{peyrl}, and, in quantum
information, the recent rational certificates for quantum-code SDP
bounds of Angl\`es Munn\'e and Huber~\cite{anglesmunne} and the
certified rational bounds for quantum-theory optimization problems of
Naceur, Wang, Magron, and Ac\'in~\cite{naceur}.

This note does \emph{not} establish: an improved endpoint over CHLMW's
unprinted semidefinite-programming calculation, had its value been
recorded (they assert a bound ``of the order $10^{-2}$'' without
printing it; $1/22\approx 0.045$ is a convenient unit-fraction cutoff
with slack, and no comparison with their unprinted value is claimed in
either direction); optimality of the floor, cap, envelope, or endpoint; the true
minimum output entropies or minimizing inputs of either channel; or any
violation in the middle-order regime $[1/4,3/4]$, which
remains open~\cite{llw}.

\section*{Acknowledgments and contribution statement}

The explicit subspaces and the proof strategy of a uniform output
eigenvalue floor are due to Cubitt, Harrow, Leung, Montanaro, and
Winter~\cite{chlmw}, whose authors were contacted with these constants
in July 2026. AI systems were used for the numerical witness search,
for drafting text under the author's direction, and for implementing
the verification programs; all mathematical claims rest on the exact
certificates and the two independent verifications described in
Section~\ref{sec:verification}.

\appendix

\section{The printed subspaces}\label{app:subspaces}

For completeness we reproduce the CHLMW bases~\cite{chlmw} in the form
consumed by the verifiers. Write $\omega=e^{2\pi i/3}$. A vector of
$\mathbb{C}^4\otimes\mathbb{C}^3$ is displayed as a $3\times4$ array
$M$ with $|v\rangle=\sum_{b,a}M_{ba}\,|a\rangle_A|b\rangle_B$
(rows $b=1,2,3$ index the output system $B$, columns $a=1,\dots,4$ the
input system $A$). The six unnormalized basis vectors of $R$ are
\begin{gather*}
\begin{pmatrix}0&0&0&0\\1&0&0&0\\0&1&0&0\end{pmatrix},\qquad
\begin{pmatrix}1&0&0&0\\0&1&0&0\\0&0&1&0\end{pmatrix},\\[1ex]
\begin{pmatrix}1&0&0&0\\0&\omega&0&0\\0&0&\omega^2&0\end{pmatrix},\qquad
\begin{pmatrix}0&1&0&0\\0&0&\omega^2&0\\0&0&0&\omega\end{pmatrix},\\[1ex]
\begin{pmatrix}0&0&1&0\\0&0&0&-1\\0&0&0&0\end{pmatrix},\qquad
\begin{pmatrix}0&0&0&1\\0&0&0&0\\-1&0&0&0\end{pmatrix},
\end{gather*}
and the six unnormalized basis vectors of $S$ are
\begin{gather*}
\begin{pmatrix}0&0&0&0\\1&0&0&0\\0&-1&0&0\end{pmatrix},\qquad
\begin{pmatrix}1&0&0&0\\0&\omega^2&0&0\\0&0&\omega&0\end{pmatrix},\\[1ex]
\begin{pmatrix}0&1&0&0\\0&0&1&0\\0&0&0&1\end{pmatrix},\qquad
\begin{pmatrix}0&1&0&0\\0&0&\omega&0\\0&0&0&\omega^2\end{pmatrix},\\[1ex]
\begin{pmatrix}0&0&1&0\\0&0&0&1\\0&0&0&0\end{pmatrix},\qquad
\begin{pmatrix}0&0&0&1\\0&0&0&0\\1&0&0&0\end{pmatrix}.
\end{gather*}
The twelve vectors are pairwise orthogonal with squared norms $2$ or
$3$; $\bar S$ is obtained by conjugating every entry
($\omega\leftrightarrow\omega^2$). The marginals of
Section~\ref{sec:channels} follow by direct computation.

\section{Artifact and verification layout}\label{app:artifact}

The certificate data and all four verification programs are
distributed with this paper as a directory \texttt{artifact/}
containing \texttt{certificates/} (five JSON files: the floor
witnesses, the cap witnesses, the joint output matrix with its
spectrum, and the two full-interval certificates), \texttt{programs/}
(the four verification programs of Section~\ref{sec:verification},
standard-library Python only), a \texttt{MANIFEST.txt} listing the
hashes below, and a \texttt{README.md} giving one command per claim
together with the expected output. The exact bytes are identified by their SHA-256
hashes (each hash printed over two lines):

\begin{center}
\footnotesize\ttfamily
\begin{tabular}{@{}l@{}}
certificates/floor-witnesses.json\\
\hspace{1em}980b1af4df2f4984d8b8fce7822a6707\\
\hspace{1em}3cbd3bb3b45e8426c97c08a51a9de419\\[2pt]
certificates/cap-witnesses.json\\
\hspace{1em}5a8a5ec0d91cf6dcdd0ee9805ce25508\\
\hspace{1em}e36d6ae601341d6904e1b2d07c10e9e0\\[2pt]
certificates/joint-spectrum.json\\
\hspace{1em}a64ad02561a509422756859120ca2be0\\
\hspace{1em}4224ce310266aee352114982588ebf9d\\[2pt]
certificates/staircase-certificate.json\\
\hspace{1em}4ca802b3fc647a6f3168c99db4a27aa1\\
\hspace{1em}8f835cfed840d00156846ab04e85c821\\[2pt]
certificates/derivative-certificate.json\\
\hspace{1em}320a0308e150d8a8b5783e9122103c01\\
\hspace{1em}8998f718f7a7dd44b6f9cc21f7192c09\\[2pt]
programs/verify\_q4\_primitives.py\\
\hspace{1em}3bee50b95a8a34e851fb48f8d6c73ca6\\
\hspace{1em}ceea4e15aa303667e20a8fb4de0260fe\\[2pt]
programs/verify\_q4\_p001.py\\
\hspace{1em}e22ed9e0aaf648997c67f1356c913ab6\\
\hspace{1em}5ad9570b24b241240af96979332a099c\\[2pt]
programs/verify\_q4\_full\_interval.py\\
\hspace{1em}16ad4151d5c2e648a98a2f3da0c0733b\\
\hspace{1em}3497130a65adf71d4a0528c3d4626703\\[2pt]
programs/verify\_q4\_derivative.py\\
\hspace{1em}72fb8d612e814ad9c8133b9b83c706d7\\
\hspace{1em}8344a3b1616dab1ab75fc719aaa2034e
\end{tabular}
\end{center}

The artifact is archived at
DOI~\href{https://doi.org/10.5281/zenodo.21968558}{10.5281/zenodo.21968558};
this note is available as
\href{https://arxiv.org/abs/2608.17376}{arXiv:2608.17376}.

\end{document}